\begin{document}

\newtheorem{theorem}{\indent Theorem}[section]
\newtheorem{proposition}[theorem]{\indent Proposition}
\newtheorem{definition}[theorem]{\indent Definition}
\newtheorem{lemma}[theorem]{\indent Lemma}
\newtheorem{remark}[theorem]{\indent Remark}
\newtheorem{corollary}[theorem]{\indent Corollary}

\begin{center}
    {\large \bf Bohmian Trajectories of the Time-oscillating Schr\"{o}dinger Equations}
\vspace{0.5cm}\\{\sc Dandan Li$^{*,1}$, Jinqiao Duan$^{2}$, Li Lin$^{1}$ and Ao Zhang$^{1}$}\\
{\small 1. School of Mathematics and Statistics and Center for Mathematical Sciences,\\ Huazhong University of Science and Technology,
 Wuhan, 430074, PR China}\\
{\small 2. Departments of Applied Mathematics \& Physics, Illinois Institute of\\ Technology,
Chicago, Illinois, 60616, USA}\\

\end{center}

\renewcommand{\theequation}{\arabic{section}.\arabic{equation}}
\numberwithin{equation}{section}

\begin{abstract}
Bohmian mechanics is a non-relativistic quantum theory based on a particle approach.
In this paper we study the Schr\"{o}dinger equation with rapidly oscillating potential and the associated Bohmian trajectory.
We prove that the corresponding Bohmian trajectory converges locally in measure, and the limit coincides with the Bohmian trajectory for the effective Schr\"{o}dinger equation on a finite time interval. This is beneficial for the efficient simulation of the Bohmian trajectories in oscillating potential fields.
\end{abstract}


\footnote[0]{\hspace*{-7.4mm}
AMS Subject Classification: 35B27; 35J10; 35R06 \\
$^*$Corresponding author\\
Emails: dandanli@hust.edu.cn (D. Li); duan@iit.edu (J. Duan); linli@hust.edu.cn (L. Lin);\\
\quad\quad zhangao1993@hust.edu.cn (A. Zhang)\\
{\bf Keywords}: Quantum mechanics; Schr\"{o}dinger equation; Oscillating potential; Bohmian trajectories; Bohmian measure.
}
\section*{Lead Paragraph}
Bohmian mechanics is a quantum theory of motion { for} concerning particles. Each particle has a position at all times and the evolution of these positions is governed by the usual quantum wave function which { satisfies} the Schr\"{o}dinger equation. Roughly speaking, according to the theory of Bohmian mechanics the trajectories of electrons around atoms can be pictured as the motion of the solar system. The Schr\"{o}dinger equation with rapidly oscillating potential { arises} in many fields. { It has been shown} that the large scale Schr\"{o}dinger equation converges to an effective Schr\"{o}dinger equation with averaged or homogenized potential under certain assumptions. A natural question is whether the corresponding Bohmian trajectory converges to the one of the effective Schr\"{o}dinger equation. Due to the complicated relation between wave functions and the Bohmian trajectories, the convergence of the Schr\"{o}dinger equations does not directly imply convergence under the corresponding Bohmian trajectories. The convergence of Bohmian measure can be obtained from convergence of wave function, and then combined with the properties of Young measure corresponding to Bohmian flow, the convergence of Bohmian trajectory in measure is obtained.

\section{Introduction}
\noindent

Bohmian mechanics is a Galilei-invariant theory for the motion of particles { (\cite{1952Bohm})}. The theory was developed by  Bohm in 1952  as an alternative explanation of the motion for particles in quantum mechanics.
Mathematically, the law of motion is a first order differential equation and thus Bohmian mechanics is { naturally formulated in configuration space}. The configuration space of $N$ particles is the vector
\begin{equation*}
   {q=(q_1,\cdots,q_N)\in\mathbb{R}^{3N}},
\end{equation*}
where $ {q_j\in\mathbb{R}^3}$ is the position of { $j$-th} particle.
Given an initial state, the theory yields a trajectory
\begin{equation*}
   {t\to X(t)\in\mathbb{R}^{3N}},
\end{equation*}
in configuration space, i.e. the position of particles at all time.

The positions of the particles can be obtained from Bohmian mechanics, and then velocities, momentum, energies and spin can be found through the positions { (\cite{2014Benseny,2014sanz,2009Teufel})}. The law of motion for ${X(t)\in\mathbb{R}^N}$ is
{ \begin{equation}\label{in1}
  {\frac{\mathrm{d}}{\mathrm{d}t}X(t)=\frac{\mathrm{Im}\left(\overline{\psi}(t,X(t))\nabla\psi(t,X(t))\right)}{|\psi(t,X(t))|^2}},
\end{equation}}
where the wave function $ {\psi(t,x):\mathbb{R}\times\mathbb{R}^{3N}\to\mathbb{C}}$
solves the Schr\"{o}dinger equation { with} a Hamiltonian operator $H$
\begin{equation*}
   {i\frac{\partial}{\partial t}\psi(t,x)=H \psi(t,x)}.
\end{equation*}
The wave function guarantees a time-dependent vector-field on configuration space. Its integral curves are the possible trajectories of Bohmian particles. Given initial data $ {X(0)}$ and $ {\psi(0)}$, the position ${X(t)}$ is determined for all time. The mathematical
foundation of \eqref{in1} is solid and the well-posedness (in the sense of some measure) can be found in { \cite{2005Teufel}} regardless of the continuity of the velocity field.

In this paper, we study the Bohmian trajectory corresponding to the large scale Schr\"{o}dinger equation under rapidly oscillating potential. In recent years, the Schr\"{o}dinger equation with  rapidly oscillating potential function and rapidly oscillating force function has been extensively studied as a fundamental equation in modern mathematical physics. The large scale Schr\"{o}dinger equation can be derived from plasma physics and as an amplitude equation in a perturbation study of sine-Gordon equation (\cite{1990Bishop}).  The problem that involved with periodic coefficient was investigated in \cite{1978Lions} and time-dependent periodic potential with large scale has been studied intensely in \cite{2005Allaire,2014Signing}.  Meanwhile, since the importance of incorporating stochastic effects in the modeling of complex systems has been recognized, the oscillating stochastic Schr\"{o}dinger equation had also attracted { a lot of attention} (\cite{2014duan,2019duan}). It has been { shown} that the large scale Schr\"{o}dinger equation converges to an effective Schr\"{o}dinger equation with averaged or homogenized potential under certain assumptions. A natural question is whether the corresponding Bohmian trajectory converges to the one of the effective Schr\"{o}dinger equation.

The study of the Bohmian trajectories arises from the semiclassically scaled Schr\"{o}dinger equations { (\cite{2016Sparber})}.
For the scaled Schr\"{o}dinger equation, the limit of the wave functions and the corresponding Bohmian trajectories had been studied intensely. Note that nothing can yield by simply passing the scale parameter to 0, the classical limit of the scaled Schr\"{o}dinger equation falls into the so called { ``singular limits"} in asymptotic approximation theory (\cite{2009Fe,2010Majda,2021Dan}) which has been studied in a variety of differential equations. However, as for {  the convergence} of Bohmian trajectories, since the relation between the corresponding Bohmian trajectories and wave functions is quite complicated, it is very difficult to prove the convergence of Bohmian trajectories directly through the results of wave functions.
In recent years, the study of the classical limit of Bohmian trajectories is of great interest. For wave packets
as defined by Hagedorn, {  it was shown in \cite{2010D}} that the Bohmian trajectories converge to Newtonian trajectories in probability.
Instead of studying Bohmian trajectories directly, \cite{2010Markowich}  considered a class of phase space measure, i.e. Bohmian measure, which naturally arises in the Bohmian interpretation of quantum mechanics.  The authors of \cite{2014Figalli} assumed that the initial data for the scaled Schr\"{o}dinger equation were { WKB (after three papers by Wentzel,
Kramers and Brillouin respectively, in 1926)} and used the theory of Young measure as a connection between the Bohmian flow and its limit. The analysis of oscillation and concentration effects in  \cite{2014Figalli} and \cite{2010Markowich} was in the semi-classical regime.

Thus, the results in \cite{2010D,2014Figalli,2010Markowich} and the study of convergence of the Schr\"{o}dinger equation with rapidly oscillating functions provide us with a  possibility to prove the convergence of corresponding Bohmian trajectory.
 In this paper, the convergence of the wave functions governed by the time-oscillating Schr\"{o}dinger equations is  proved under certain assumptions. { This is then to consider the  limit} of the corresponding Bohmian trajectories. We start by deriving the limit of Bohmian measures in dependence on the scale of oscilations and concentrations of the sequence of wave functions under consideration. { Furthermore}, given the equivalence between Bohmian measure and the Bohmian flow, { we also prove rigorously that the corresponding Bohmian trajectory converges locally in measure} on the finite time interval, and the limit trajectory coincides with the effective system of the time-oscillating Schr\"{o}dinger equation.

The rest of the paper is organized as follows. Section 2 is devoted to the basic setting of our problem and some preliminary results in oscillating periodic functions, measure theory, whereas in Section 3 the convergence theorems (Theorem \ref{conhs} and Theorem \ref{step3}) are established for both wave function satisfying the time-oscillating Schr\"{o}dinger equation and the associated  Bohmian trajectory. {  Finally, Section 4 is the Conclusion and Discussion section.}

\section{Preliminaries}
\noindent

Consider the  wave function $\psi^\varepsilon(t,\cdot)$ in $\mathbb{R}^N$  satisfying the following time-oscillating Schr\"{o}dinger equation:
\begin{equation} \label{hs}
\begin{cases}
i\partial_t\psi^\varepsilon(t,x) = -\frac{1}{2}\Delta \psi^\varepsilon(t,x) + V(\frac{t}{\varepsilon},x)\psi^\varepsilon(t,x),\quad x\in\mathbb{R}^N,~ t>0,\\
\psi^\varepsilon(t,x)|_{t=0}=\psi_0^\varepsilon,
\end{cases}
\end{equation}
where $\psi_0^\varepsilon\in H^1(\mathbb{R}^N)$, $N\geq 3$, $\varepsilon>0$ and { the Hamiltonian operator $H=-\frac{1}{2}\Delta + V(\cdot,\cdot)$}.

The potential $V(t,x)\in C_0^\infty(\mathbb{R}\times\mathbb{R}^N;\mathbb{R})$ with period 1 is assumed to be bounded below and subquadratic, i.e.,
\begin{equation}\label{Vass}
  \partial_x^k V(\cdot,x)\in L^\infty(\mathbb{R}^N) \quad \forall~k\in\mathbb{N}^n~\text{such that } |k|\geq2.
\end{equation}
Let $V^\star(x):=\int_0^1V(t,x)\mathrm{d}t$, the effective system of the time-oscillating Schr\"{o}dinger equation \eqref{hs} is written as
\begin{equation} \label{ehs}
\begin{cases}
i\partial_t\psi(t,x) = -\frac{1}{2}\Delta \psi(t,x) + V^\star(x)\psi(t,x),\quad x\in\mathbb{R}^N,~ t>0,\\
\psi(t,x)|_{t=0}=\psi_0,
\end{cases}
\end{equation}
where we assume that $\psi_0^\varepsilon\to\psi_0$ as $\varepsilon\to0$ strongly in $H^1(\mathbb{R}^N)$.

For the wave function $\psi^\varepsilon(t,x)$ one can associate two basic real valued densities, namely, the position density and the current density defined by
\begin{equation*}
  \rho^\varepsilon(t,x)=|\psi^\varepsilon(t,x)|^2,\quad J^\varepsilon(t,x)=\mathrm{Im}(\overline{\psi^\varepsilon}(t,x)\nabla\psi^\varepsilon(t,x)).
\end{equation*}
These quantities satisfy the conversation law
\begin{equation*}
  \partial_t\rho^\varepsilon + \mathrm{div}_xJ^\varepsilon =0.
\end{equation*}
The corresponding position density and current density of the {  effective wave function} $\psi(t,x)$ defined by
\begin{equation*}
  \rho(t,x)=|\psi(t,x)|^2,\quad J(t,x)=\mathrm{Im}(\overline{\psi}(t,x)\nabla\psi(t,x)).
\end{equation*}

Due to the Bohmian mechanics offered by Bohm in 1952 as an another approach to quantum mechanics,
one can define particle trajectories $X_t^\varepsilon:x\to X^\varepsilon(t,x)\in\mathbb{R}^N$ via the following differential equation
\begin{equation} \label{xhs}
\begin{cases}
\dot{X}^\varepsilon(t,x) = u^\varepsilon(t,X^\varepsilon(t,x)),\\
X^\varepsilon(t,x)|_{t=0}=x,\quad x\in\mathbb{R}^N,
\end{cases}
\end{equation}
where the velocity field is given by
\begin{equation*}
  u^\varepsilon(t,x):=\frac{J^\varepsilon(t,x)}{\rho^\varepsilon(t,x)}.
\end{equation*}
The particle trajectory $X_t:x\to X(t,x)\in\mathbb{R}^N$ corresponding to the {  effective wave function} $\psi(t,x)$ can be defined by
\begin{equation} \label{exhs}
\begin{cases}
{ \dot{X}(t,x) = u(t,X(t,x)),}\\
X(t,x)|_{t=0}=x,\quad x\in\mathbb{R}^N,
\end{cases}
\end{equation}
with the velocity field
\begin{equation*}
  u(t,x):=\frac{J(t,x)}{\rho(t,x)}.
\end{equation*}

To characterize the Bohmian mechanics more rigorously, we recall the Bohmian measure from \cite{2014Figalli} and \cite{2010Markowich} .
\begin{definition}[Bohmian measure]
Let $\varepsilon>0$ and $\psi^\varepsilon\in H^1(\mathbb{R}^N)$ be a sequence of wave functions with corresponding densities $\rho^\varepsilon$ and $J^\varepsilon$. Then the associated Bohmian measure $\beta^\varepsilon\in\mathcal{M}^+(\mathbb{R}_x^N\times\mathbb{R}_p^N)$ is given by
\begin{equation}\label{dbm}
  \langle\beta^\varepsilon,\varphi\rangle :=\int_{\mathbb{R}^N}\rho^\varepsilon(x)\varphi\left(x,\frac{\rho^\varepsilon(x)}{J^\varepsilon(x)}\right),\quad \forall~ \varphi\in C_0(\mathbb{R}_x^N\times\mathbb{R}_p^N),
\end{equation}
where $\mathcal{M}^+$ denotes the set of nonnegative Borel measures on phase space, $\langle\cdot,\cdot\rangle$ denotes the corresponding duality bracket between $\mathcal{M}^+(\mathbb{R}_x^N\times\mathbb{R}_p^N)$ and $C_0(\mathbb{R}_x^N\times\mathbb{R}_p^N)$, and $C_0$ is the closure (with respect to the uniform norm) of the set of continuous functions with compact support.
\end{definition}

We also recall the following assertions in \cite{2010Markowich} which ensures existence of a classical limit of $\beta^\varepsilon$.

\begin{lemma}[\cite{2010Markowich}]\label{wkb1}Let $\psi^\varepsilon$ be uniformly bounded in $L^2(\mathbb{R}^N)$. Then, up to extraction of sub-sequences, there exists a limiting measure $\beta\in\mathcal{M}^+(\mathbb{R}_x^N\times\mathbb{R}_p^N)$, such that
\begin{equation*}
  \beta^\varepsilon\to\beta \quad \text{as}\quad \varepsilon\to0_+\quad
\end{equation*}
weakly-$\star$ in $\mathcal{M}^+(\mathbb{R}_x^N\times\mathbb{R}_p^N)$.
\end{lemma}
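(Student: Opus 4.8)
The plan is to recognize the statement as a standard instance of sequential weak-$\star$ compactness in the dual of a separable Banach space, for which the only genuine input is a uniform bound on the total mass of the family $\{\beta^\varepsilon\}$. By the Riesz representation theorem the space $\mathcal{M}(\mathbb{R}_x^N\times\mathbb{R}_p^N)$ of finite signed Borel measures on phase space is the topological dual of $C_0(\mathbb{R}_x^N\times\mathbb{R}_p^N)$, and the latter is separable. Thus, once I establish that $\sup_{\varepsilon>0}\|\beta^\varepsilon\|_{\mathcal{M}}<\infty$, the Banach--Alaoglu theorem (in its sequential form, valid because the predual is separable and hence the weak-$\star$ topology is metrizable on bounded sets) will furnish a subsequence and a limiting measure $\beta$ with $\beta^\varepsilon\to\beta$ weakly-$\star$.

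First I would establish the uniform mass bound. The measure $\beta^\varepsilon$ is precisely the push-forward of the finite, nonnegative measure $\rho^\varepsilon(x)\,\mathrm{d}x$ under the phase-space map $x\mapsto\bigl(x,u^\varepsilon(x)\bigr)$ with $u^\varepsilon=J^\varepsilon/\rho^\varepsilon$, so that from the defining formula \eqref{dbm} one reads off, for every $\varphi\in C_0(\mathbb{R}_x^N\times\mathbb{R}_p^N)$,
\[
\bigl|\langle\beta^\varepsilon,\varphi\rangle\bigr|\le\|\varphi\|_{\infty}\int_{\mathbb{R}^N}\rho^\varepsilon(x)\,\mathrm{d}x=\|\varphi\|_{\infty}\,\|\psi^\varepsilon\|_{L^2(\mathbb{R}^N)}^2 .
\]
In particular the total mass of $\beta^\varepsilon$ equals $\|\psi^\varepsilon\|_{L^2(\mathbb{R}^N)}^2$, which by hypothesis is uniformly bounded in $\varepsilon$. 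This is the only quantitative estimate the proof requires, and it rests on the elementary but essential observation that the mass of the Bohmian measure coincides with the (conserved) squared $L^2$-norm of the wave function.

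With the bound in hand I would extract the limit and check its properties. Applying sequential Banach--Alaoglu to $\{\beta^\varepsilon\}$ yields a sequence $\varepsilon_k\to0_+$ and a measure $\beta\in\mathcal{M}(\mathbb{R}_x^N\times\mathbb{R}_p^N)$ with $\langle\beta^{\varepsilon_k},\varphi\rangle\to\langle\beta,\varphi\rangle$ for all $\varphi\in C_0$. Nonnegativity of the limit is inherited pointwise from the test functions: for any $0\le\varphi\in C_0$ one has $\langle\beta^{\varepsilon_k},\varphi\rangle\ge0$, and passing to the limit gives $\langle\beta,\varphi\rangle\ge0$, so indeed $\beta\in\mathcal{M}^+(\mathbb{R}_x^N\times\mathbb{R}_p^N)$.

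I do not anticipate a serious obstacle; the argument is the Riesz/Banach--Alaoglu scheme, with all the content packed into identifying the total mass. The two points deserving care are bookkeeping rather than difficulty. The first is the well-definedness of $\beta^\varepsilon$ on the set where $\rho^\varepsilon$ vanishes and the velocity $u^\varepsilon=J^\varepsilon/\rho^\varepsilon$ is singular; this is harmless because that defect set carries no $\rho^\varepsilon\,\mathrm{d}x$-mass, so the push-forward is unambiguous $\rho^\varepsilon\,\mathrm{d}x$-almost everywhere. The second is that the limit $\beta$ need \emph{not} retain the full mass $\lim_k\|\psi^{\varepsilon_k}\|_{L^2}^2$, since mass may escape to spatial or momentum infinity; this causes no problem for the stated conclusion, as weak-$\star$ convergence in the dual of $C_0$ requires no tightness.
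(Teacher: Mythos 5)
Your proposal is correct and is exactly the standard argument: the total mass of $\beta^\varepsilon$ equals $\|\psi^\varepsilon\|_{L^2}^2$, so the uniform $L^2$ bound gives a uniformly bounded family in $\mathcal{M}^+ = \bigl(C_0(\mathbb{R}_x^N\times\mathbb{R}_p^N)\bigr)'_+$, and sequential Banach--Alaoglu (valid since $C_0$ is separable) yields the weak-$\star$ convergent subsequence, with nonnegativity preserved in the limit. The paper itself gives no proof, deferring to \cite{2010Markowich}, and your argument coincides with the one given there; your cautionary remarks about the null set of $\rho^\varepsilon$ and possible loss of mass at infinity are accurate and do not affect the conclusion.
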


\begin{theorem}[\cite{2010Markowich}]\label{wkb2}Let $\psi^\varepsilon$ be uniformly bounded in $H^1(\mathbb{R}^N)$ with corresponding densities $\rho^\varepsilon, J^\varepsilon\in L^1(\mathbb{R}^N)$. If $\rho^\varepsilon\xrightarrow{\varepsilon\to0_+}\rho$ in $L^1(\mathbb{R}^N)$ strongly and $J^\varepsilon\xrightarrow{\varepsilon\to0_+} \tilde{J}$ in measure, then $\beta$ is mono-kinetic, i.e.
\begin{equation*}
  \beta(x,p)=\rho(t,x)\delta\left(p-\frac{\tilde{J}(t,x)}{\rho(t,x)}\right).
\end{equation*}
\end{theorem}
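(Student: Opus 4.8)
The plan is to test $\beta^\varepsilon$ against an arbitrary $\varphi \in C_0(\mathbb{R}^N_x \times \mathbb{R}^N_p)$ and pass to the limit directly in the defining formula
\begin{equation*}
\langle \beta^\varepsilon, \varphi\rangle = \int_{\mathbb{R}^N} \rho^\varepsilon(x)\, \varphi\Bigl(x, \tfrac{J^\varepsilon(x)}{\rho^\varepsilon(x)}\Bigr)\, dx,
\end{equation*}
where $J^\varepsilon/\rho^\varepsilon$ is the Bohmian velocity field, aiming to show that this converges to $\int_{\mathbb{R}^N}\rho(x)\,\varphi\bigl(x, \tilde{J}(x)/\rho(x)\bigr)\,dx$, which is exactly the pairing of $\varphi$ with the mono-kinetic measure $\rho(x)\delta(p - \tilde{J}(x)/\rho(x))$. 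By Lemma \ref{wkb1} the convergence $\beta^\varepsilon \to \beta$ holds weakly-$\star$ along a subsequence, so identifying the limit of the right-hand side for every such $\varphi$ pins down $\beta$ uniquely.

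First I would split the domain of integration according to the vacuum set of the limit density, writing $\mathbb{R}^N = \{\rho = 0\} \cup \{\rho > 0\}$. On $\{\rho = 0\}$ the integrand is controlled by $\|\varphi\|_\infty \rho^\varepsilon$, and since $\rho^\varepsilon \to \rho$ strongly in $L^1$ one has $\int_{\{\rho=0\}}\rho^\varepsilon\,dx \to \int_{\{\rho=0\}}\rho\,dx = 0$, so this contribution is negligible in the limit. On $\{\rho > 0\}$ the quotient is well-behaved: from $\rho^\varepsilon \to \rho$ in $L^1$ (hence locally in measure) and $J^\varepsilon \to \tilde{J}$ in measure, together with the continuity of $(a,b)\mapsto a/b$ away from $b=0$, I would deduce $J^\varepsilon/\rho^\varepsilon \to \tilde{J}/\rho$ locally in measure on $\{\rho > 0\}$. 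Continuity of $\varphi$ then gives $\varphi(x, J^\varepsilon/\rho^\varepsilon) \to \varphi(x, \tilde{J}/\rho)$ in measure, and the elementary bound
\begin{equation*}
\Bigl|\rho^\varepsilon\varphi\bigl(x, \tfrac{J^\varepsilon}{\rho^\varepsilon}\bigr) - \rho\,\varphi\bigl(x, \tfrac{\tilde{J}}{\rho}\bigr)\Bigr| \le \|\varphi\|_\infty|\rho^\varepsilon - \rho| + \rho\,\Bigl|\varphi\bigl(x, \tfrac{J^\varepsilon}{\rho^\varepsilon}\bigr) - \varphi\bigl(x, \tfrac{\tilde{J}}{\rho}\bigr)\Bigr|
\end{equation*}
closes the argument: the first term tends to $0$ in $L^1$ by hypothesis, while the second is dominated by the integrable function $2\|\varphi\|_\infty \rho$ and converges to $0$ in measure, so a Vitali / dominated-convergence argument (extracting an a.e.-convergent subsequence) forces its integral to vanish.

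The main obstacle I anticipate is the passage to the limit inside the nonlinear quotient $J^\varepsilon/\rho^\varepsilon$, since division is discontinuous exactly where $\rho$ vanishes and the two hypotheses supply convergence in genuinely different and weak topologies (strong $L^1$ for $\rho^\varepsilon$, only convergence in measure for $J^\varepsilon$). The resolution is to keep the vacuum set strictly isolated and dispatched by the crude $L^1$ bound above, and, on $\{\rho>0\}$, to exploit that convergence in measure is stable both under continuous composition and under multiplication by a uniformly bounded, $L^1$-convergent factor. One further point requiring care is that the weak-$\star$ limit $\beta$ is a priori defined only along a subsequence; but because the computed limit of $\langle\beta^\varepsilon, \varphi\rangle$ is independent of the subsequence, the identification extends to the full family. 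Combining the two regions yields $\langle\beta,\varphi\rangle = \int_{\{\rho>0\}}\rho\,\varphi\bigl(x, \tilde{J}/\rho\bigr)\,dx$ for every $\varphi \in C_0$, which is precisely the claimed mono-kinetic representation.
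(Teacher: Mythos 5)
The paper itself gives no proof of this statement---it is recalled verbatim from \cite{2010Markowich}---so there is no internal argument to compare against; your proof is correct and is essentially the standard one from that reference (split off the vacuum set $\{\rho=0\}$ via the crude bound $\|\varphi\|_\infty\rho^\varepsilon$, then on $\{\rho>0\}$ pass to a common a.e.-convergent subsequence and apply dominated convergence with dominating function $2\|\varphi\|_\infty\rho$, concluding for the full family by the subsequence principle). Two minor points worth making explicit: first, the pairing in \eqref{dbm} should read $\varphi\bigl(x,J^\varepsilon(x)/\rho^\varepsilon(x)\bigr)$ (the paper has the quotient inverted), which is what you correctly use; second, on $\{\rho>0\}\cap\{\rho^\varepsilon=0\}$ the velocity $J^\varepsilon/\rho^\varepsilon$ is undefined and the integrand is conventionally set to zero there---your two-term estimate survives this convention since the first term $\|\varphi\|_\infty|\rho^\varepsilon-\rho|$ absorbs the discrepancy, and along the extracted subsequence one eventually has $\rho^\varepsilon(x)>0$ for a.e.\ $x\in\{\rho>0\}$, so the a.e.\ convergence of $\varphi(x,J^\varepsilon/\rho^\varepsilon)$ is unaffected. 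Since you establish $\langle\beta^\varepsilon,\varphi\rangle\to\int\rho\,\varphi(x,\tilde J/\rho)\,dx$ for every $\varphi\in C_0(\mathbb{R}^N_x\times\mathbb{R}^N_p)$, this identifies the weak-$\star$ limit completely and no separate tightness argument in $p$ is needed.
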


Finally, we collect some basic fact from measure theory and homogenization theory.
\begin{theorem}[Weak limits of rapidly oscillating periodic functions \cite{1999Donato}]\label{wlr}
Let $1\leq p \leq +\infty$ and $f$ be a $Y$-periodic function in $L^p(Y)$. Set
\begin{equation*}
  f_\varepsilon(x)=f\left(\frac{x}{\varepsilon}\right)\quad a.e.~\text{on }\mathbb{R}^N.
\end{equation*}
Then, if $p<+\infty$, as $\varepsilon\to0$
\begin{equation*}
  f_\varepsilon\rightharpoonup \mathcal{M}_Y(f)=\frac{1}{|Y|}\int_Yf(y)\mathrm{d}y\quad \text{weakly in }L^p(\omega),
\end{equation*}
for every bounded open subset $\omega$ of $\mathbb{R}^N$.

If $p=+\infty$, one has
\begin{equation*}
  f_\varepsilon\rightharpoonup\mathcal{M}_Y(f)=\frac{1}{|Y|}\int_{Y}f(y)\mathrm{d}y\quad \text{weakly}-\star~\text{in }L^\infty(\mathbb{R}^N).
\end{equation*}
\end{theorem}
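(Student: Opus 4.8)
The plan is to obtain the weak (for $p<+\infty$) and weak-$\star$ (for $p=+\infty$) convergence from three ingredients: a uniform a priori bound on the family $\{f_\varepsilon\}$, convergence of $\int f_\varepsilon g$ tested against a conveniently dense class of functions $g$, and an $\epsilon/3$ density argument that, thanks to the uniform bound, upgrades the latter to the full dual space. The heart of the matter is a single averaging computation; the density bookkeeping is routine except at one endpoint.

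First I would establish the uniform bound. Since $\omega$ is bounded and $f\in L^p(Y)\subset L^1(Y)$, the change of variables $y=x/\varepsilon$ together with the $Y$-periodicity of $f$ gives, for $p<+\infty$,
\[
\int_\omega |f_\varepsilon(x)|^p\,\mathrm{d}x = \varepsilon^N\int_{\omega/\varepsilon}|f(y)|^p\,\mathrm{d}y,
\]
which is comparable to $|\omega|\,\mathcal{M}_Y(|f|^p)$ up to a boundary error from the cells meeting $\partial\omega$ that vanishes as $\varepsilon\to0$; hence $\{f_\varepsilon\}$ is bounded in $L^p(\omega)$. For $p=+\infty$ one simply has $\|f_\varepsilon\|_{L^\infty(\mathbb{R}^N)}=\|f\|_{L^\infty(Y)}$.

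The core step is the averaging lemma: for every $g\in C_c^\infty(\mathbb{R}^N)$,
\[
\int_{\mathbb{R}^N} f_\varepsilon(x)\,g(x)\,\mathrm{d}x \;\xrightarrow{\varepsilon\to0}\; \mathcal{M}_Y(f)\int_{\mathbb{R}^N} g(x)\,\mathrm{d}x.
\]
I would prove this by partitioning $\operatorname{supp} g$ into the scaled cells $\varepsilon(k+Y)$, $k\in\mathbb{Z}^N$. On each cell $g$ is nearly constant by uniform continuity (with oscillation $O(\varepsilon)$ since $g\in C^1$), while periodicity gives the exact identity $\int_{\varepsilon(k+Y)} f_\varepsilon\,\mathrm{d}x=\varepsilon^N|Y|\,\mathcal{M}_Y(f)$. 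Replacing $g$ by its value at a corner $\varepsilon k$ on each cell, the sum $\sum_k g(\varepsilon k)\,\varepsilon^N|Y|\,\mathcal{M}_Y(f)$ is a Riemann sum converging to $\mathcal{M}_Y(f)\int g$, and the replacement error is bounded by $C\varepsilon\,\|f_\varepsilon\|_{L^1(\operatorname{supp} g)}\to0$.

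Finally I would conclude by density. For $1<p<+\infty$ the class $C_c^\infty$ is dense in $L^{p'}(\omega)$ with $p'<\infty$, so for $g\in L^{p'}(\omega)$ and $g_n\to g$ in $L^{p'}$ the splitting $\int f_\varepsilon g=\int f_\varepsilon(g-g_n)+\int f_\varepsilon g_n$ combined with the uniform $L^p$ bound (controlling the first term by $\sup_\varepsilon\|f_\varepsilon\|_{L^p}\,\|g-g_n\|_{L^{p'}}$) yields weak convergence $f_\varepsilon\rightharpoonup\mathcal{M}_Y(f)$ in $L^p(\omega)$. The case $p=+\infty$ is identical with $L^1(\mathbb{R}^N)$ as the predual, giving weak-$\star$ convergence. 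The hard part will be the endpoint $p=1$: there $p'=\infty$ and $C_c^\infty$ is \emph{not} dense in $L^\infty$, so the first term in the splitting cannot be controlled this way, and the density argument must be replaced by an equi-integrability (Dunford--Pettis) argument ruling out concentration of mass; I expect this to be the main obstacle of the proof.
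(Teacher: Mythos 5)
This statement is quoted from Cioranescu--Donato \cite{1999Donato} as a standard tool; the paper supplies no proof of it, so there is no in-paper argument to compare against. Judged on its own terms, your proposal is the classical textbook proof and is sound for $1<p\le+\infty$: the uniform $L^p(\omega)$ bound via the cell covering, the exact identity $\int_{\varepsilon(k+Y)}f_\varepsilon\,\mathrm{d}x=\varepsilon^N\int_Y f\,\mathrm{d}y$ from periodicity, the Riemann-sum limit against $C_c^\infty$ test functions, and the $\epsilon/3$ upgrade using density of $C_c^\infty$ in $L^{p'}(\omega)$ (respectively in $L^1(\mathbb{R}^N)$ for the weak-$\star$ case) are all correct. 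One cosmetic remark: for the uniform bound you only need an upper estimate by covering $\omega/\varepsilon$ with whole cells contained in a fixed enlargement of $\omega/\varepsilon$; no claim about vanishing boundary error is required.

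The genuine gap is the endpoint $p=1$, which you correctly diagnose but do not close: the statement as quoted covers $1\le p\le+\infty$, so the theorem is not fully proved as written. The fix, however, is lighter than the Dunford--Pettis machinery you anticipate: approximate $f$ rather than the test function. Given $\eta>0$, write $f=f_M+r_M$ with $f_M:=\max(\min(f,M),-M)$ and $M$ so large that $\|r_M\|_{L^1(Y)}<\eta$. The same covering computation used for the uniform bound gives $\|(r_M)_\varepsilon\|_{L^1(\omega)}\le C(\omega,Y)\,\eta$ uniformly for small $\varepsilon$, so $\bigl|\int_\omega (r_M)_\varepsilon g\bigr|\le C\eta\|g\|_{L^\infty(\omega)}$ for any $g\in L^\infty(\omega)$. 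Since $f_M\in L^\infty(Y)$ and $\omega$ is bounded (so $g\mathbf{1}_\omega\in L^1(\mathbb{R}^N)$), your $p=+\infty$ case yields $\int_\omega (f_M)_\varepsilon g\to\mathcal{M}_Y(f_M)\int_\omega g$, and $\mathcal{M}_Y(f_M)\to\mathcal{M}_Y(f)$ as $M\to\infty$. Letting $\varepsilon\to0$ and then $\eta\to0$ closes $p=1$. Equivalently, the truncation estimate is exactly the equi-integrability needed for Dunford--Pettis, so your diagnosis was right even though the step is easier than you feared.
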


The following theorem shows the relationship between some sequence and the Young measure generated by it.
\begin{theorem}[\cite{1997Gasser}]\label{Ys}If $|\Omega|<\infty$ and $\nu_x$ is the Young measure generated by the (whole) sequence $\{u_j\}_{j\in\mathbb{N}}$ then
\begin{equation*}
  u_j\to u \text{ as }j\to\infty \text{ in measure }\Leftrightarrow \nu_x = \delta_{u(x)} \text{ for a.e. }x\in\Omega.
\end{equation*}
\end{theorem}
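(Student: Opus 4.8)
The plan is to prove both implications simultaneously by testing the generated Young measure against a single bounded Carath\'eodory integrand that encodes the distance to the candidate limit $u$. Recall that saying $\nu_x$ is generated by the \emph{whole} sequence $\{u_j\}$, with the $u_j$ taking values in some $\mathbb{R}^m$, means that for every Carath\'eodory function $f(x,\lambda)$ for which $\{f(\cdot,u_j(\cdot))\}$ is relatively weakly compact in $L^1(\Omega)$ one has $f(\cdot,u_j(\cdot))\rightharpoonup \bar f$ weakly in $L^1(\Omega)$, where $\bar f(x)=\int_{\mathbb{R}^m} f(x,\lambda)\,\mathrm{d}\nu_x(\lambda)$. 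Since $|\Omega|<\infty$, any \emph{bounded} Carath\'eodory integrand automatically produces a uniformly integrable, hence relatively weakly compact, sequence, so the representation applies to it.

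First I would fix the truncated-distance integrand $g(x,\lambda):=\min(|\lambda-u(x)|,1)$. Because $u$ is measurable and $\lambda\mapsto \min(|\lambda-u(x)|,1)$ is continuous and bounded by $1$, the function $g$ is an admissible bounded Carath\'eodory integrand. Testing the weak-$L^1$ convergence above against the constant function $1\in L^\infty(\Omega)=(L^1(\Omega))^{\ast}$ (legitimate because $|\Omega|<\infty$) yields
\begin{equation*}
  \Theta_j:=\int_\Omega \min(|u_j(x)-u(x)|,1)\,\mathrm{d}x \;\xrightarrow{\;j\to\infty\;}\; \Theta_\infty:=\int_\Omega\int_{\mathbb{R}^m}\min(|\lambda-u(x)|,1)\,\mathrm{d}\nu_x(\lambda)\,\mathrm{d}x .
\end{equation*}
This single convergence, valid for the full sequence, is the crux: it links the behaviour of $u_j$ directly to a functional of $\nu_x$.

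From here both directions reduce to elementary observations. On the one hand, since $|\Omega|<\infty$, the quantity $\Theta_j$ is (comparable to) the standard metric for convergence in measure, so $u_j\to u$ in measure is \emph{equivalent} to $\Theta_j\to0$. On the other hand, the integrand $\min(|\lambda-u(x)|,1)$ is nonnegative and vanishes only at $\lambda=u(x)$; hence, using that $\nu_x$ is a probability measure for a.e.\ $x$, $\Theta_\infty=0$ holds \emph{if and only if} $\nu_x(\{u(x)\})=1$, i.e.\ $\nu_x=\delta_{u(x)}$ for a.e.\ $x\in\Omega$. Combining these with $\Theta_j\to\Theta_\infty$ produces the chain $u_j\to u$ in measure $\iff \Theta_j\to0 \iff \Theta_\infty=0 \iff \nu_x=\delta_{u(x)}$ a.e., which is exactly the asserted equivalence.

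The main obstacle I anticipate is justifying the limit $\Theta_j\to\Theta_\infty$ at the level of the whole sequence together with the probability-measure normalization of $\nu_x$. Concretely, one must guarantee that no mass of $\{u_j\}$ escapes to infinity, so that $\nu_x$ is genuinely a probability measure for a.e.\ $x$; otherwise $\Theta_\infty=0$ would only force the \emph{captured} mass to concentrate at $u(x)$, and the implication $\Theta_\infty = 0 \Rightarrow \nu_x=\delta_{u(x)}$ could fail. Under the standing hypotheses this tightness is built into the statement that $\nu_x$ is generated by the sequence. A secondary technical point is that $g$ is bounded but not compactly supported in $\lambda$, so one should either invoke the fundamental theorem of Young measures directly for bounded Carath\'eodory integrands, or approximate $g$ from below by compactly supported continuous functions and pass to the limit by monotone convergence; the finiteness of $|\Omega|$ makes either route routine.
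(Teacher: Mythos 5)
The paper does not actually prove this statement: it is imported verbatim from \cite{1997Gasser} and used as a black box, so there is no in-paper argument to compare against. Your proof is correct and is essentially the standard derivation from the fundamental theorem of Young measures: test against the single bounded Carath\'eodory integrand $g(x,\lambda)=\min(|\lambda-u(x)|,1)$, note that $\Theta_j=\int_\Omega g(x,u_j(x))\,\mathrm{d}x$ metrizes convergence in measure precisely because $|\Omega|<\infty$, and observe that $\Theta_\infty=0$ characterizes $\nu_x=\delta_{u(x)}$ among \emph{probability} measures; the ``whole sequence'' hypothesis is what lets you upgrade $\Theta_j\to\Theta_\infty$ from subsequences to the full sequence, and you use it correctly. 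The one point you flag but leave slightly soft --- that $\|\nu_x\|=1$ a.e., i.e.\ that no mass escapes to infinity, which is needed both for the representation formula applied to your bounded non-compactly-supported $g$ and for the implication $\Theta_\infty=0\Rightarrow\nu_x=\delta_{u(x)}$ --- should be pinned down rather than declared ``built in,'' since under the sub-probability convention for generated Young measures it is not automatic. It is, however, available in both directions: for ($\Rightarrow$), convergence in measure to a finite-valued $u$ forces tightness via $\{|u_j|\ge M\}\subset\{|u_j-u|\ge 1\}\cup\{|u|\ge M-1\}$; for ($\Leftarrow$), the hypothesis $\nu_x=\delta_{u(x)}$ a.e.\ gives $\|\nu_x\|=1$ a.e., which by Ball's theorem is equivalent to tightness. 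With that sentence added, both implications close exactly as you describe.
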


\section{Convergence of Bohmian Trajectories}
\noindent

In this section, we start with the existence and boundedness of solutions of the time-oscillating Schr\"{o}dinger equation \eqref{hs} and its effective system \eqref{ehs}. The results can be obtained by Galerkin approximation method and we omit the details here (\cite{2012Lions}).

\begin{lemma}\label{hsbdd}
Let the potential $V(t,x)\in C_0^\infty(\mathbb{R}\times\mathbb{R}^N;\mathbb{R})$ with period 1 is assumed to be bounded below and subquadratic, then there exists constant $T_0>0$ such that the wave function $\psi^\varepsilon(t,x)$ of \eqref{hs} is uniformly bounded in $H^1(\mathbb{R}^N)$ for all $t\in[0,T_0)$.
\end{lemma}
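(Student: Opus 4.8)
The plan is to derive the uniform $H^1$ bound from a priori energy estimates, which is exactly the content the Galerkin scheme reduces to; I control separately the mass $\|\psi^\varepsilon(t)\|_{L^2}$ and the kinetic part $\|\nabla\psi^\varepsilon(t)\|_{L^2}$. First, for the mass: pairing \eqref{hs} with $\overline{\psi^\varepsilon}$, integrating over $\mathbb{R}^N$ and taking imaginary parts, the real potential term cancels and one obtains $\frac{d}{dt}\|\psi^\varepsilon(t)\|_{L^2}^2=0$. Hence $\|\psi^\varepsilon(t)\|_{L^2}=\|\psi_0^\varepsilon\|_{L^2}$, bounded uniformly in $\varepsilon$ since $\psi_0^\varepsilon\to\psi_0$ strongly in $H^1$.

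The crucial observation that renders the fast oscillation harmless is that one should \emph{not} differentiate the full energy $\frac12\|\nabla\psi^\varepsilon\|_{L^2}^2+\int V(t/\varepsilon,x)|\psi^\varepsilon|^2\,dx$, whose time derivative would produce the dangerous factor $\varepsilon^{-1}(\partial_t V)(t/\varepsilon,x)$, but only the kinetic energy $K^\varepsilon(t):=\frac12\|\nabla\psi^\varepsilon(t)\|_{L^2}^2$. Using $\partial_t\psi^\varepsilon=i(\frac12\Delta\psi^\varepsilon-V\psi^\varepsilon)$ and integrating by parts, the self-adjoint kinetic contribution is purely imaginary and drops out, leaving
\begin{equation*}
\frac{d}{dt}\|\nabla\psi^\varepsilon(t)\|_{L^2}^2=-2\int_{\mathbb{R}^N}(\nabla_x V)(t/\varepsilon,x)\cdot J^\varepsilon(t,x)\,dx .
\end{equation*}
Only the \emph{spatial} gradient of $V$ enters, and its bounds depend on $V$ through its spatial behaviour alone; consequently every constant produced below is independent of $\varepsilon$, which is precisely what makes the estimate uniform.

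To close the estimate, note that by the subquadratic assumption \eqref{Vass} the second and higher spatial derivatives of $V$ are bounded, so $|(\nabla_x V)(t/\varepsilon,x)|\le C(1+|x|)$; together with $|J^\varepsilon|\le|\psi^\varepsilon||\nabla\psi^\varepsilon|$ and Cauchy--Schwarz this gives $\big|\frac{d}{dt}\|\nabla\psi^\varepsilon\|_{L^2}^2\big|\le C\|\langle x\rangle\psi^\varepsilon(t)\|_{L^2}\|\nabla\psi^\varepsilon(t)\|_{L^2}$. The linear growth of $\nabla_x V$ forces me to control as well the weighted second moment $B^\varepsilon(t):=\int|x|^2|\psi^\varepsilon|^2\,dx$; from the conservation law $\partial_t\rho^\varepsilon=-\mathrm{div}_x J^\varepsilon$ one finds $\frac{d}{dt}B^\varepsilon(t)=2\int x\cdot J^\varepsilon\,dx\le 2\,B^\varepsilon(t)^{1/2}\|\nabla\psi^\varepsilon(t)\|_{L^2}$. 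Setting $\Phi^\varepsilon(t)=\|\psi^\varepsilon(t)\|_{L^2}^2+\|\nabla\psi^\varepsilon(t)\|_{L^2}^2+B^\varepsilon(t)$ and adding the two differential inequalities yields $\frac{d}{dt}\Phi^\varepsilon(t)\le C\,\Phi^\varepsilon(t)$, and Gronwall gives $\Phi^\varepsilon(t)\le\Phi^\varepsilon(0)e^{Ct}$ with $C$ independent of $\varepsilon$. Since $\Phi^\varepsilon(0)$ is uniformly bounded (from $\psi_0^\varepsilon\to\psi_0$ in $H^1$ together with uniform control of the initial variance), this provides the uniform-in-$\varepsilon$ bound on any finite interval, in particular on $[0,T_0)$.

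The main obstacle is the last step: because the subquadratic potential has a gradient growing linearly in $x$, the kinetic estimate does not close by itself and must be coupled to the weighted moment estimate, and one must verify that all constants remain independent of $\varepsilon$ — which works precisely because the fast time scale enters only through $(\nabla_x V)(t/\varepsilon,\cdot)$, whose spatial bounds are the $\varepsilon$-independent bounds of $\nabla_x V$. If instead one reads $V\in C_0^\infty$ as genuinely bounded with bounded gradient, the weighted moment is unnecessary and $\frac{d}{dt}\|\nabla\psi^\varepsilon\|_{L^2}^2\le C\|\nabla\psi^\varepsilon\|_{L^2}$ closes directly, giving the bound on any finite interval.
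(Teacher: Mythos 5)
Your proposal is essentially correct, but note that the paper does not actually prove Lemma \ref{hsbdd}: it simply asserts that the result ``can be obtained by the Galerkin approximation method'' with a reference to Lions--Magenes and omits all details. What you have supplied is precisely the a priori energy estimate that such a scheme would have to rest on, so your argument fills in what the paper leaves out rather than taking a different route. The two key points you make are sound and worth keeping: (i) mass conservation follows from the reality of $V$; (ii) one must differentiate only the kinetic energy $\tfrac12\|\nabla\psi^\varepsilon\|_{L^2}^2$ rather than the full energy, since the latter would produce the term $\varepsilon^{-1}(\partial_t V)(t/\varepsilon,x)$ and destroy uniformity in $\varepsilon$; the identity $\frac{d}{dt}\|\nabla\psi^\varepsilon\|_{L^2}^2=-2\int(\nabla_x V)(t/\varepsilon,x)\cdot J^\varepsilon\,dx$ is correct and involves only spatial bounds on $V$, hence constants independent of $\varepsilon$. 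This is exactly why the conclusion is uniform in $\varepsilon$, a point the paper never addresses.

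One caveat: under the genuinely subquadratic reading of \eqref{Vass}, where $\nabla_x V$ grows linearly, your closure requires the second moment $\int|x|^2|\psi_0^\varepsilon|^2\,dx$ to be uniformly bounded, and this does \emph{not} follow from $\psi_0^\varepsilon\to\psi_0$ in $H^1(\mathbb{R}^N)$; you flag this, but it is an additional hypothesis beyond what the lemma states, so strictly speaking your argument in that regime proves a slightly weaker statement (or the lemma needs the extra assumption). Under the literal hypothesis $V\in C_0^\infty$, which makes $\nabla_x V$ bounded, your simpler estimate $\frac{d}{dt}\|\nabla\psi^\varepsilon\|_{L^2}^2\le C\|\nabla\psi^\varepsilon\|_{L^2}$ closes using mass conservation alone and in fact yields the bound on every finite interval, which is stronger than the existence of some $T_0>0$ claimed in the lemma. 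The only remaining formal point is that the integrations by parts should be justified on a Galerkin or regularized approximation before passing to the limit, which is consistent with the paper's stated (but unexecuted) strategy.
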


\begin{lemma}\label{ehsbdd}
Let the potential $V(t,x)\in C_0^\infty(\mathbb{R}\times\mathbb{R}^N;\mathbb{R})$ with period 1 is assumed to be bounded below and subquadratic, and  $V^\star(x)$ be as in Sect. 2, then there exists constant $T^\star_0>0$ such that he wave function $\psi(t,x)$ of \eqref{ehs} is uniformly bounded in $H^1(\mathbb{R}^N)$ for all $t\in[0,T^\star_0)$.
\end{lemma}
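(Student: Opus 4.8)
The plan is to read off the uniform $H^1$ bound from the two conserved quantities of the effective equation \eqref{ehs}, namely mass and energy, and to justify the formal identities through the same Galerkin scheme already invoked for Lemma \ref{hsbdd}. The feature that makes the effective case cleaner than the oscillating one is that the effective potential $V^\star(x)=\int_0^1 V(t,x)\,\mathrm{d}t$ is time-independent, so the Hamiltonian $H^\star:=-\tfrac12\Delta+V^\star$ is autonomous and the energy is \emph{exactly} conserved, with no $\partial_t V$ source term to absorb by Gronwall's inequality.

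First I would record mass conservation: testing \eqref{ehs} against $\overline{\psi}$, integrating over $\mathbb{R}^N$, and taking imaginary parts, the kinetic and potential contributions cancel because $V^\star$ is real-valued, so $\frac{\mathrm{d}}{\mathrm{d}t}\|\psi(t)\|_{L^2}^2=0$ and hence $\|\psi(t)\|_{L^2}=\|\psi_0\|_{L^2}$ for all $t$. Next I would test \eqref{ehs} against $\overline{\partial_t\psi}$ and take real parts; since $H^\star$ is self-adjoint and time-independent, the cross term $\mathrm{Re}\langle H^\star\psi,-iH^\star\psi\rangle$ vanishes and no $\partial_t V^\star$ term appears, yielding $\frac{\mathrm{d}}{\mathrm{d}t}E(t)=0$ for
\[
E(t)=\tfrac12\|\nabla\psi(t)\|_{L^2}^2+\int_{\mathbb{R}^N}V^\star(x)\,|\psi(t,x)|^2\,\mathrm{d}x .
\]
Using $E(t)=E(0)$ together with the lower bound $V^\star\geq-C_0$ inherited from $V$, the gradient is extracted from the conserved energy,
\[
\tfrac12\|\nabla\psi(t)\|_{L^2}^2=E(0)-\int_{\mathbb{R}^N}V^\star|\psi(t)|^2\,\mathrm{d}x\leq E(0)+C_0\|\psi_0\|_{L^2}^2,
\]
and combining this with mass conservation bounds $\|\psi(t)\|_{H^1}^2=\|\psi(t)\|_{L^2}^2+\|\nabla\psi(t)\|_{L^2}^2$ uniformly in $t$ by a constant depending only on $\psi_0$ and $C_0$.

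To make this rigorous I would carry out the differentiations and integrations by parts on the finite-dimensional Galerkin approximants $\psi_m$, where they are legitimate, obtain both estimates with constants independent of $m$, and pass to the limit using weak-$\star$ lower semicontinuity of the $H^1$ norm. The main obstacle is controlling the potential-energy term $\int V^\star|\psi|^2$ under the subquadratic growth \eqref{Vass}, which only gives $|V^\star(x)|\lesssim 1+|x|^2$: the gradient bound needs the initial energy $E(0)$ to be finite, hence $\psi_0$ in the weighted space $\{f:\,|x|f\in L^2\}$, and the solution must stay regular enough along the flow for the identity $E(t)=E(0)$ to remain meaningful; this weighted-norm propagation is what confines the argument to a finite interval and fixes $T_0^\star$. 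If instead $V\in C_0^\infty$ is read literally as compactly supported, then $V^\star$ is bounded, $E(0)<\infty$ automatically, the potential term is harmless, and the $H^1$ bound is in fact global, with $T_0^\star$ chosen merely to match the interval on which Lemma \ref{hsbdd} supplies its uniform bound.
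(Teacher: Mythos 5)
The paper does not actually prove this lemma: it states only that the result ``can be obtained by Galerkin approximation method'' and cites Lions--Magenes, omitting all details. Your argument --- conservation of mass and of the autonomous energy $E(t)$, justified on Galerkin approximants and combined with the lower bound on $V^\star$ to extract the gradient --- is precisely the standard way to carry out that omitted argument, and it is correct; your closing remark that the hypotheses $V\in C_0^\infty(\mathbb{R}\times\mathbb{R}^N;\mathbb{R})$ and ``subquadratic'' pull in different directions (literal compact support makes $V^\star$ bounded, trivializes the potential-energy term, and yields a \emph{global} $H^1$ bound, so that $T_0^\star$ plays no real role) is an accurate observation about a genuine looseness in the statement as written.
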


Note that { by use of} theorem of weak limits of rapidly oscillating periodic functions (Theorem \ref{wlr}) presented in \cite{1999Donato}, we obtain the following theorem for the homogenized problem \eqref{hs}.
\begin{theorem}[Convergence of wave function]\label{conhs}
Assume that the potential function satisfies \eqref{Vass} and { the initial wave function} $\psi_0^\varepsilon$ is strongly convergent to $\psi_0$ in $H^1(\mathbb{R}^N)$. Let $\hat{T}=\min\{T_0,T^\star_0\}$, $\psi^\varepsilon(t,x)$ and $\psi(t,x)$ be the solutions for \eqref{hs} and \eqref{ehs} respectively. Then
\begin{equation*}
  \lim_{\varepsilon\to0}\|\psi^\varepsilon(t,x)-\psi(t,x)\|_{H^1(\mathbb{R}^N)} = 0 \quad \text{for all }t\in[0,\hat{T\hat{}}).
\end{equation*}
\end{theorem}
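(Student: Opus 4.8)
The plan is to work directly with the difference $w^\varepsilon := \psi^\varepsilon - \psi$ and to run an energy estimate in $H^1$, the only delicate point being the rapidly oscillating source term. Subtracting \eqref{ehs} from \eqref{hs} and writing $V^\varepsilon(t,x):=V(t/\varepsilon,x)$, one obtains
\begin{equation*}
i\partial_t w^\varepsilon = -\tfrac12\Delta w^\varepsilon + V^\varepsilon w^\varepsilon + g^\varepsilon, \qquad w^\varepsilon(0)=\psi_0^\varepsilon-\psi_0,
\end{equation*}
with the source $g^\varepsilon(t,x):=\bigl(V^\varepsilon(t,x)-V^\star(x)\bigr)\psi(t,x)$. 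Since $V^\varepsilon$ is real, $H^\varepsilon(t):=-\tfrac12\Delta+V^\varepsilon(t,\cdot)$ is self-adjoint, so in the $L^2$ identity $\tfrac12\tfrac{d}{dt}\|w^\varepsilon\|_{L^2}^2=\operatorname{Im}\langle g^\varepsilon,w^\varepsilon\rangle$ the potential term cancels; an analogous computation for $\tfrac12\tfrac{d}{dt}\|\nabla w^\varepsilon\|_{L^2}^2$ leaves, besides a commutator term bounded by $C\|w^\varepsilon\|_{H^1}^2$ (using the bounds on the derivatives of $V$ from \eqref{Vass}), the oscillating contribution $\operatorname{Im}\langle\nabla g^\varepsilon,\nabla w^\varepsilon\rangle$. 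Everything thus reduces to showing that the time integrals of $\langle g^\varepsilon,w^\varepsilon\rangle$ and $\langle\nabla g^\varepsilon,\nabla w^\varepsilon\rangle$ tend to $0$, after which Gr\"onwall together with the uniform $H^1$ bounds of Lemma \ref{hsbdd} and Lemma \ref{ehsbdd} and $w^\varepsilon(0)\to0$ closes the argument.

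The crucial observation is that one cannot simply move the $L^2$ norm inside the time integral: by Theorem \ref{wlr} the coefficient $V^\varepsilon$ converges to $V^\star$ only weakly, so $\|g^\varepsilon(t)\|_{L^2}$ does not vanish pointwise in $t$ and a naive Gr\"onwall bound fails. Instead I would exploit the oscillation through its time antiderivative. Because $V(\cdot,x)$ has period $1$ with mean $V^\star(x)$, the function $K(\tau,x):=\int_0^\tau\bigl(V(\sigma,x)-V^\star(x)\bigr)\mathrm{d}\sigma$ is bounded and $1$-periodic in $\tau$; setting $\Theta^\varepsilon(t,x):=\varepsilon K(t/\varepsilon,x)$ one has $\partial_t\Theta^\varepsilon=V^\varepsilon-V^\star$ and $\|\Theta^\varepsilon\|_{L^\infty}=O(\varepsilon)$. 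Writing $g^\varepsilon=(\partial_t\Theta^\varepsilon)\psi$ and integrating by parts in time transfers the derivative off the oscillation and produces a prefactor $\varepsilon$, the boundary terms at $s=0$ (where $K(0,\cdot)=0$) and $s=t$ being $O(\varepsilon)$. What makes this work is that the oscillation lives only in $t$: spatial derivatives of $\Theta^\varepsilon$ are again $O(\varepsilon)$, with no factor $\varepsilon^{-1}$, so all spatial commutators stay small.

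After the integration by parts the remaining integrands are of the form $\langle\Theta^\varepsilon\partial_t\psi,\cdot\rangle$, $\langle\Theta^\varepsilon\psi,\partial_t w^\varepsilon\rangle$ and their gradient analogues; using $\partial_t\psi$ from \eqref{ehs} and $\partial_t w^\varepsilon$ from the difference equation, each is $O(\varepsilon)$ times the uniform $H^1$ (or $H^{-1}$) bounds already in hand. I expect the main obstacle to be precisely the gradient-level term $\int_0^t\langle\nabla g^\varepsilon,\nabla w^\varepsilon\rangle$: integrating by parts in space turns it into a pairing against $\Delta w^\varepsilon$, which only lies in $H^{-1}$, and the subsequent integration by parts in time generates a contribution carrying one more derivative on $\psi$ (effectively $\Delta\psi\in L^2$). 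Closing this estimate therefore requires a little extra regularity of the effective solution, namely propagating an $H^2$ bound for $\psi$ from the subquadratic hypothesis \eqref{Vass}, and this is where I would spend most of the effort; the $L^2$ part of the estimate, by contrast, needs only the $H^1$ bounds already available. A cleaner but less quantitative alternative, should the regularity bookkeeping prove awkward, is to first extract weak-$H^1$ and strong-$L^2_{\mathrm{loc}}$ limits by an Aubin--Lions argument, identify the limit as $\psi$ via Theorem \ref{wlr} (the product of the weakly convergent $V^\varepsilon$ with the strongly convergent $\psi^\varepsilon$) together with uniqueness for \eqref{ehs}, and then upgrade to strong $H^1$ by establishing convergence of the $H^1$ norms.
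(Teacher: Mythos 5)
Your strategy is sound, but it follows a genuinely different route from the paper's. The paper tests the difference equation against $-\Delta\overline{(\psi^\varepsilon-\psi)}$, arrives at a Gr\"onwall inequality whose forcing term is $b^\varepsilon(t)=\left|\left(V(t/\varepsilon,\cdot)-V^\star,\psi^\varepsilon\Delta(\psi^\varepsilon-\psi)\right)\right|$, and then asserts that $b^\varepsilon(t)\to0$ for each fixed $t$ by invoking Theorem \ref{wlr} to claim $V(t/\varepsilon,\cdot)\rightharpoonup V^\star$ weak-$\star$ in $L^\infty(\mathbb{R}^N)$. This is exactly the step you warn against: the oscillation is in $t$, not in $x$, so the weak-$\star$ convergence lives in $L^\infty(\mathbb{R}_t)$ and says nothing about the spatial pairing at a fixed time; moreover the test function $\psi^\varepsilon\Delta(\psi^\varepsilon-\psi)$ itself depends on $\varepsilon$, and the absolute value sits inside the time integral, both of which obstruct a direct appeal to weak convergence. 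Your replacement --- writing $V^\varepsilon-V^\star=\partial_t\Theta^\varepsilon$ with $\|\Theta^\varepsilon\|_{L^\infty}=O(\varepsilon)$ and integrating by parts in time so that the oscillation is traded for a factor of $\varepsilon$ --- is the standard and correct way to exploit a purely temporal oscillation, and your observation that spatial derivatives of $\Theta^\varepsilon$ cost no negative powers of $\varepsilon$ is precisely what keeps the $H^1$-level estimate viable. The price, which you identify honestly, is that the terms produced by $\partial_t\psi$ and $\partial_t w^\varepsilon$ at the gradient level need uniform $H^2$ (and, after the spatial integration by parts, effectively higher) bounds on $\psi$ and $\psi^\varepsilon$; these are not supplied by Lemmas \ref{hsbdd} and \ref{ehsbdd} and would have to be propagated separately, though this is routine under the standing hypothesis $V\in C_0^\infty$. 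In short, your argument costs more regularity bookkeeping than the paper's, but it replaces the one step of the paper's proof that does not withstand scrutiny with a quantitative $O(\varepsilon)$ estimate; the Aubin--Lions compactness alternative you sketch would also work with less regularity, at the expense of losing any rate.
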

\begin{proof}Consider the difference of $\psi^\varepsilon(t,x)-\psi(t,x)$, which satisfies the following equation
\begin{equation}\label{deq}
  i\frac{\partial(\psi^\varepsilon(t,x)-\psi(t,x))}{\partial_t} = -\frac{1}{2}\Delta(\psi^\varepsilon(t,x)-\psi(t,x)) + V(\frac{t}{\varepsilon},x)\psi^\varepsilon(t,x) - V^\star(x)\psi(t,x).
\end{equation}

Multiplying both sides of \eqref{deq} by $-\Delta\overline{(\psi^\varepsilon(t,x)-\psi(t,x))}$, then
\begin{equation}\label{deq1}
  \begin{aligned}
     &ia\left(\frac{\partial(\psi^\varepsilon(t,x)-\psi(t,x))}{\partial_t},\psi^\varepsilon(t,x)-\psi(t,x)\right) \\= &\frac{1}{2}\left(\Delta(\psi^\varepsilon(t,x)-\psi(t,x)),\Delta\overline{(\psi^\varepsilon(t,x)-\psi(t,x))})\right)
- \left(V(\frac{t}{\varepsilon},x)\psi^\varepsilon(t,x),\Delta\overline{(\psi^\varepsilon(t,x)-\psi(t,x))}\right)\\
 &+ V^\star(x)a\left(\psi^\varepsilon(t,x)-\psi(t,x),\psi^\varepsilon(t,x)-\psi(t,x)\right),
  \end{aligned}
\end{equation}
where
\begin{equation*}
  a(\psi(x),\phi(x))=\sum_{i=1}^N\int_{\mathbb{R^N}}\frac{\partial\psi(x)}{\partial x_i}\cdot\frac{\partial\phi(x)}{\partial x_i}\mathrm{d}x
\end{equation*}
for any $\psi(x),\phi(x)\in H^1(\mathbb{R}^N)$.
{ Taking the imaginary }part of \eqref{deq1}, we have
\begin{equation}\label{deq2}
  \begin{aligned}
  \frac{\mathrm{d}}{\mathrm{d}t}\|\psi^\varepsilon(t,x)-\psi(t,x)\|^2_{H^1(\mathbb{R}^N)} \leq \left|\left(V(\frac{t}{\varepsilon},x)-V^\star(x),\psi^\varepsilon(t,x)\Delta(\psi^\varepsilon(t,x)-\psi(t,x))\right)\right|\\
+M\|\psi^\varepsilon(t,x)-\psi(t,x)\|^2_{H^1(\mathbb{R}^N)},
  \end{aligned}
\end{equation}
where $M:=\|V(t,x)\|_{L^\infty}$ and $M$ is finite since the assumption that $V(t,x)\in C_0^\infty(\mathbb{R}\times\mathbb{R}^N;\mathbb{R})$.

Then for any $t\in[0,\hat{T})$, the following inequality can be obtained by applying the Gronwall inequality
\begin{equation}\label{deq3}
  \begin{aligned}
   \|\psi^\varepsilon(t,x)-\psi(t,x)\|^2_{H^1(\mathbb{R}^N)} \leq \left(\|\psi_0^\varepsilon-\psi_0\|^2_{H^1(\mathbb{R}^N)}+ \int_0^te^{(t-s)M}b^\varepsilon(s)\mathrm{d}s\right),
  \end{aligned}
\end{equation}
where the constant $C>0$ independent of $t,x,\varepsilon$ and
\begin{equation*}
  b^\varepsilon(t):=\left|\left(V(\frac{t}{\varepsilon},x)-V^\star(x),\psi^\varepsilon(t,x)\Delta(\psi^\varepsilon(t,x)-\psi(t,x))\right)\right|.
\end{equation*}
For the rapidly oscillating periodic potential,  Theorem \ref{wlr} implies that
\begin{equation*}
  V(\frac{t}{\varepsilon},x)\xrightarrow{\varepsilon\to0}V^\star(x) \quad \text{weakly}-\star~\text{in}~L^\infty(\mathbb{R}^N).
\end{equation*}
Then together with the fact that $\psi^\varepsilon(t,x)\Delta(\psi^\varepsilon(t,x)-\psi(t,x))\in L^1(\mathbb{R}^N)$, we can further deduce that
\begin{equation}\label{deq4}
  b^\varepsilon(t)=\left|\left(V(\frac{t}{\varepsilon},x)-V^\star(x),\psi^\varepsilon(t,x)\Delta(\psi^\varepsilon(t,x)-\psi(t,x))\right)\right|\xrightarrow{\varepsilon\to0}0.
\end{equation}
Moreover the fact that $\psi^\varepsilon(t,x)\Delta(\psi^\varepsilon(t,x)-\psi(t,x))\in L^1(\mathbb{R}^N)$ follows from that
\begin{equation*}
\begin{aligned}
  &\int_{\mathbb{R}^N}\left|\psi^\varepsilon(t,x)\Delta(\psi^\varepsilon(t,x)-\psi(t,x))\right|\mathrm{d}x \\
\leq & \|\psi^\varepsilon(t,x)\|_{H^1(\mathbb{R}^N)}\cdot\|\Delta(\psi^\varepsilon(t,x)-\psi(t,x))\|_{H^{-1}(\mathbb{R}^N)}\\
 <& \infty,
\end{aligned}
\end{equation*}
where the last inequality holds because of Lemma \ref{hsbdd} and \ref{ehsbdd}.

Thus together with \eqref{deq4} and the assumption that $\psi_0^\varepsilon\xrightarrow{\varepsilon\to0}\psi_0$ strongly in $H^1(\mathbb{R}^N)$, we conclude from \eqref{deq3} that
\begin{equation*}
  \|\psi^\varepsilon(t,x)-\psi(t,x)\|^2_{H^1(\mathbb{R}^N)}\xrightarrow{\varepsilon\to0}0 \quad \text{for all }t\in[0,\hat{T}).
\end{equation*}
This proof of Theorem \ref{conhs} is complete.
\end{proof}

The following Theorem is about the convergence of the position density $\rho^\varepsilon$ and the current density $J^\varepsilon$. { This} is a natural consequence of Theorem \ref{conhs}.
\begin{theorem}[Convergence of densities]\label{step1}
Assume that the assumptions in Theorem \ref{conhs} hold. Let $\hat{T}$ be as in Theorem \ref{conhs}, $\psi^\varepsilon(t,x)$ and $\psi(t,x)$ be the solutions for \eqref{hs} and \eqref{ehs} respectively, then
\begin{equation*}
   \rho^\varepsilon\xrightarrow{\varepsilon\to0_+}\rho~ \text{in}~ L^1(\mathbb{R}^N)~ \text{strongly } \text{for all }t\in[0,\hat{T\hat{}}),
\end{equation*}
  and
\begin{equation*}
J^\varepsilon\xrightarrow{\varepsilon\to0_+} J~ \text{in}~ L^1(\mathbb{R}^N)~ \text{strongly } \text{for all }t\in[0,\hat{T\hat{}}).
\end{equation*}
\end{theorem}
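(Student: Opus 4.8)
The plan is to deduce both convergences directly from the strong $H^1$-convergence $\psi^\varepsilon \to \psi$ established in Theorem \ref{conhs}, together with the uniform $H^1$-bounds furnished by Lemma \ref{hsbdd} and Lemma \ref{ehsbdd}. The key algebraic observation is that both densities are bilinear in $\psi^\varepsilon$ and its gradient, so each difference splits into terms carrying exactly one factor of $\psi^\varepsilon-\psi$ (or of its gradient), which is then controlled by Cauchy--Schwarz.

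For the position density I would first write the pointwise identity
\begin{equation*}
  \rho^\varepsilon - \rho = |\psi^\varepsilon|^2 - |\psi|^2 = (\psi^\varepsilon - \psi)\,\overline{\psi^\varepsilon} + \psi\,\overline{(\psi^\varepsilon - \psi)}.
\end{equation*}
Integrating and applying Cauchy--Schwarz gives
\begin{equation*}
  \|\rho^\varepsilon - \rho\|_{L^1(\mathbb{R}^N)} \leq \|\psi^\varepsilon - \psi\|_{L^2(\mathbb{R}^N)}\left(\|\psi^\varepsilon\|_{L^2(\mathbb{R}^N)} + \|\psi\|_{L^2(\mathbb{R}^N)}\right).
\end{equation*}
Since the two $L^2$-norms on the right are bounded uniformly in $\varepsilon$ by the $H^1$-bounds, and $\|\psi^\varepsilon - \psi\|_{L^2}\to 0$ by Theorem \ref{conhs}, the left-hand side tends to $0$ for every $t\in[0,\hat{T})$.

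For the current density I would use the analogous splitting
\begin{equation*}
  \overline{\psi^\varepsilon}\nabla\psi^\varepsilon - \overline{\psi}\nabla\psi = \overline{\psi^\varepsilon}\,\nabla(\psi^\varepsilon - \psi) + \overline{(\psi^\varepsilon - \psi)}\,\nabla\psi,
\end{equation*}
so that, taking imaginary parts and integrating,
\begin{equation*}
  \|J^\varepsilon - J\|_{L^1(\mathbb{R}^N)} \leq \|\psi^\varepsilon\|_{L^2}\,\|\nabla(\psi^\varepsilon - \psi)\|_{L^2} + \|\psi^\varepsilon - \psi\|_{L^2}\,\|\nabla\psi\|_{L^2}.
\end{equation*}
Here the multiplier factors $\|\psi^\varepsilon\|_{L^2}$ and $\|\nabla\psi\|_{L^2}$ are again bounded uniformly, while both $\|\nabla(\psi^\varepsilon - \psi)\|_{L^2}$ and $\|\psi^\varepsilon - \psi\|_{L^2}$ vanish as $\varepsilon\to 0$ thanks to the full $H^1$-convergence. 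Hence $J^\varepsilon \to J$ in $L^1$ for every $t\in[0,\hat{T})$.

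There is no genuine obstacle in this argument: the statement is, as the authors note, a natural consequence of Theorem \ref{conhs}. The only point requiring a little care is to ensure that the multiplier factors ($\|\psi^\varepsilon\|_{L^2}$ and $\|\nabla\psi\|_{L^2}$) are bounded independently of $\varepsilon$, which is exactly what the uniform $H^1$-bounds of Lemma \ref{hsbdd} and Lemma \ref{ehsbdd} provide; without that uniformity the Cauchy--Schwarz estimates would fail to close.
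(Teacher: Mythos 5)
Your proposal is correct and follows essentially the same route as the paper: the identical bilinear splitting of $\rho^\varepsilon-\rho$ and $J^\varepsilon-J$ into terms carrying one factor of $\psi^\varepsilon-\psi$ (or its gradient), followed by Cauchy--Schwarz, the uniform $H^1$-bounds of Lemma \ref{hsbdd} and Lemma \ref{ehsbdd}, and the strong convergence from Theorem \ref{conhs}. No substantive differences to report.
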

\begin{proof}
The uniform boundness of $\psi^\varepsilon$ in Lemma \ref{hsbdd} implies that the corresponding densities $\rho^\varepsilon(t,x),~J^\varepsilon(t,x)\in L^1(\mathbb{R}^N)$, and { so do $\rho(t,x)$ and $J(t,x)$}.

With the definition of $\rho^\varepsilon(t,x)$ and $\rho(t,x)$ in hand, we can show that
\begin{align*}
  &\int_{\mathbb{R}^N}|\rho^\varepsilon(t,x)-\rho(t,x)|\mathrm{d}x \\
   = &\int_{\mathbb{R}^N}\left||\psi^\varepsilon(t,x)|^2-|\psi(t,x)|^2\right|\mathrm{d}x\\
    = &\int_{\mathbb{R}^N}|\psi^\varepsilon(t,x)\overline{\psi^\varepsilon}(t,x)-\psi(t,x)\overline{\psi}(t,x)|\mathrm{d}x\\
   \leq & \int_{\mathbb{R}^N}|\overline{\psi^\varepsilon}(t,x)||\psi^\varepsilon(t,x)-\psi(t,x)|\mathrm{d}x+\int_{\mathbb{R}^N}|\psi(t,x)||\overline{\psi^\varepsilon}(t,x)-\overline{\psi}(t,x)|\mathrm{d}x\\
    \leq& \|\psi^\varepsilon(t,x)\|_{L^2(\mathbb{R}^N)}\|\psi^\varepsilon(t,x)-\psi(t,x)\|_{L^2(\mathbb{R}^N)} + \|\psi(t,x)\|_{L^2(\mathbb{R}^N)}\|\psi^\varepsilon(t,x)-\psi(t,x)\|_{L^2(\mathbb{R}^N)}\\
  &\to0\quad \text{as }\varepsilon\to0,
\end{align*}
where the last assertion follows from the uniform boundness of $\psi^\varepsilon,\psi$ in Lemma \ref{hsbdd} and Lemma \ref{ehsbdd} and convergence of $\psi^\varepsilon$ in Theorem \ref{conhs}.

{ Also the definition of the current densities} $J^\varepsilon(t,x)$ and $J(t,x)$ imply that
\begin{align*}
   & \int_{\mathbb{R}^N}|J^\varepsilon(t,x)-J(t,x)|\mathrm{d}x \\
    = &\int_{\mathbb{R}^N}|\mathrm{Im}(\overline{\psi^\varepsilon}(t,x)\nabla\psi^\varepsilon(t,x))-\mathrm{Im}(\overline{\psi}(t,x)\nabla\psi(t,x))|\mathrm{d}x\\
\leq & \|\psi^\varepsilon(t,x)\|_{L^2(\mathbb{R}^N)}\|\nabla\psi^\varepsilon(t,x)-\nabla\psi(t,x)\|_{L^2(\mathbb{R}^N)} + \|\nabla\psi(t,x)\|_{L^2(\mathbb{R}^N)}\|\psi^\varepsilon(t,x)-\psi(t,x)\|_{L^2(\mathbb{R}^N)}\\
  &\to0\quad \text{as }\varepsilon\to0,
\end{align*}
where the last assertion follows from the uniform boundness of $\psi^\varepsilon,\psi$ in Lemma \ref{hsbdd} and Lemma \ref{ehsbdd} and convergence of $\psi^\varepsilon$ in Theorem \ref{conhs}. The proof of Theorem \ref{step1} is complete.
\end{proof}

\begin{theorem}[Convergence of Bohmian measures]\label{step2}
Assume that the assumptions in Theorem \ref{conhs} hold. Let $\psi^\varepsilon(t,x)$ and $\psi(t,x)$ be the solutions for \eqref{hs} and \eqref{ehs} respectively, then up to extraction of sub-sequences, there exists a limiting measure $\beta\in\mathcal{M}^+(\mathbb{R}_x^N\times\mathbb{R}_p^N)$, such that
\begin{equation*}
  \beta^\varepsilon\to\beta \quad \text{as}\quad \varepsilon\to0_+\quad
\end{equation*}
weakly-$\star$ in $\mathcal{M}^+(\mathbb{R}_x^N\times\mathbb{R}_p^N)$ and the limit $\beta$ is mono-kinetic, i.e.
\begin{equation*}
  \beta(x,p)=\rho(t,x)\delta\left(p-\frac{J(t,x)}{\rho(t,x)}\right).
\end{equation*}
\end{theorem}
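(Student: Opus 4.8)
The plan is to derive this statement directly from the abstract results recalled in Section 2, namely Lemma \ref{wkb1} and Theorem \ref{wkb2}, by checking their hypotheses against the convergence already established in Theorem \ref{conhs} and Theorem \ref{step1}. First I would observe that by Lemma \ref{hsbdd} the family $\psi^\varepsilon(t,\cdot)$ is uniformly bounded in $H^1(\mathbb{R}^N)$, hence in particular in $L^2(\mathbb{R}^N)$, for every fixed $t\in[0,\hat T)$. Lemma \ref{wkb1} then applies and guarantees that, up to extraction of a subsequence, the associated Bohmian measures $\beta^\varepsilon$ converge weakly-$\star$ in $\mathcal{M}^+(\mathbb{R}_x^N\times\mathbb{R}_p^N)$ to some limit $\beta$. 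This settles the existence part.

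To show that $\beta$ is mono-kinetic I would invoke Theorem \ref{wkb2}, so that the task reduces to verifying its four hypotheses. The uniform $H^1$ bound and the membership $\rho^\varepsilon,J^\varepsilon\in L^1(\mathbb{R}^N)$ are again furnished by Lemma \ref{hsbdd}, while Theorem \ref{step1} supplies the strong $L^1$ convergence $\rho^\varepsilon\to\rho$. The remaining hypothesis is convergence in measure of the current densities. Here Theorem \ref{step1} actually gives the stronger statement $J^\varepsilon\to J$ strongly in $L^1(\mathbb{R}^N)$, and I would deduce convergence in measure from it by Chebyshev's inequality: for every $\alpha>0$ one has $|\{x:|J^\varepsilon(t,x)-J(t,x)|>\alpha\}|\leq \alpha^{-1}\|J^\varepsilon(t,\cdot)-J(t,\cdot)\|_{L^1(\mathbb{R}^N)}\to0$ as $\varepsilon\to0$, which is exactly the input Theorem \ref{wkb2} requires.

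With all hypotheses of Theorem \ref{wkb2} in place, the limit is forced to be mono-kinetic of the form $\beta(x,p)=\rho(t,x)\delta(p-\tilde J(t,x)/\rho(t,x))$, where $\tilde J$ denotes the in-measure limit of $J^\varepsilon$; by uniqueness of limits in measure together with the $L^1$ convergence above one has $\tilde J=J$, giving the claimed expression. I do not expect a genuine obstacle in this argument, since the analytic work has already been carried out in Theorems \ref{conhs} and \ref{step1}; the only point meriting an explicit word is the passage from strong $L^1$ convergence to convergence in measure noted above. Finally, because the limiting measure is uniquely determined by $\rho$ and $J$, every subsequence of $\beta^\varepsilon$ admits a further subsequence with this same limit, so in fact the whole family $\beta^\varepsilon$ converges weakly-$\star$ to $\beta$.
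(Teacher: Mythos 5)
Your proposal is correct and follows essentially the same route as the paper: uniform $H^1$ boundedness from Lemma \ref{hsbdd} feeds Lemma \ref{wkb1} for existence of the weak-$\star$ limit, and the strong $L^1$ convergence of $\rho^\varepsilon$ and $J^\varepsilon$ from Theorem \ref{step1} feeds Theorem \ref{wkb2} for mono-kineticity. Your explicit Chebyshev step upgrading strong $L^1$ convergence of $J^\varepsilon$ to convergence in measure, and your closing remark that the uniqueness of the limit promotes subsequential to full convergence, are small but worthwhile refinements that the paper leaves implicit.
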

\begin{proof}Since $\psi^\varepsilon$ is uniformly bounded in $H^1(\mathbb{R}^N)$ in Lemma \ref{hsbdd}, then by Lemma \ref{wkb1} there exists some measure $\beta$ such that
\begin{equation*}
  \beta^\varepsilon\to\beta \quad \text{as}\quad \varepsilon\to0_+\quad
\end{equation*}
weakly-$\star$ in $\mathcal{M}^+(\mathbb{R}_x^N\times\mathbb{R}_p^N)$.
Then due to the Theorem \ref{wkb2} and the strong convergence of $\rho^\varepsilon(t,x)$ and $J^\varepsilon(t,x)$ stated in Theorem \ref{step1}, we can infer that the limiting measure is is mono-kinetic and given by
\begin{equation*}
  \beta(x,p)=\rho(t,x)\delta\left(p-\frac{J(t,x)}{\rho(t,x)}\right).
\end{equation*}
The proof of Theorem \ref{step2} is complete.
\end{proof}

Now we reformulate the Bohmian mechanics in its Lagrangian formulation(\cite{2014Figalli,2010Markowich}). Let the motion be $P^\varepsilon = u^\varepsilon(x,X^\varepsilon(t,x))$, we can differentiate $P^\varepsilon(t,x)$ $\rho_0^\varepsilon$-a.e. to obtain
\begin{equation}\label{dm}
  \dot{P}^\varepsilon(t,x)=\partial_tu^\varepsilon(x,X^\varepsilon(t,x)) + (u^\varepsilon(x,X^\varepsilon(t,x))\cdot\nabla)u^\varepsilon(x,X^\varepsilon(t,x)).
\end{equation}
With assumptions given on the potential $V(t,x)$, we can infer that the wave function $\psi^\varepsilon(t,x)$ is smooth enough(\cite{1997Gasser}). Then one can deduce a closed system of equations for the densities $\rho^\varepsilon$ and $J^\varepsilon$ from the oscillating Schr\"{o}dinger equations \eqref{hs}, i.e. the well-known {  hydrodynamic formulation of quantum mechanics} (\cite{2006Wyatt}) which holds in the sense of distribution
\begin{equation} \label{hcs}
\begin{cases}
\partial_t\rho^\varepsilon + \mathrm{div}J^\varepsilon =0,\\
\partial_t J^\varepsilon + \mathrm{div}\left(\frac{J^\varepsilon\otimes J^\varepsilon}{\rho^\varepsilon}\right) + \rho^\varepsilon\nabla V = \frac{1}{2}\rho^\varepsilon\nabla\left(\frac{\Delta\sqrt{\rho^\varepsilon}}{\sqrt{\rho^\varepsilon}}\right).
\end{cases}
\end{equation}
Note that $J^\varepsilon=\rho^\varepsilon u^\varepsilon$, together with \eqref{hcs} we can formulate the following system of ordinary differential equations which fully determines the quantum mechanical dynamics
\begin{equation} \label{thcs}
\begin{cases}
\dot{X}^\varepsilon = P^\varepsilon,\\
\dot{P}^\varepsilon = -\nabla V(\frac{t}{\varepsilon},X^\varepsilon) + \frac{1}{2}\frac{\Delta \sqrt{\rho^\varepsilon}}{\sqrt{\rho^\varepsilon}},\\
X^\varepsilon(t,x)|_{t=0} =x, \quad  \\
P^\varepsilon(t,x)|_{t=0} = u^\varepsilon(0,x),
\end{cases}
\end{equation}
where the initial velocity given by
 $$u^\varepsilon(0,x):=\frac{J_0^\varepsilon}{\rho_0^\varepsilon}=\frac{\mathrm{Im(\psi^\varepsilon_0\nabla\psi^\varepsilon_0)}}{|\psi^\varepsilon_0|^2}.$$

Theorem \ref{Ys} implies that the limit of Bohmian trajectories is connected to the Young measure generated by the Bohmian dynamics. To proceed further we recall the following definition of the Young measure. For the solutions $\Phi^\varepsilon(t,y):= \left(X^\varepsilon(t,y),P^\varepsilon(t,y)\right)$ which is measurable in $t$ and $y$, there exists an associated Young measure $(t,y)\to\Upsilon_{t,y}(dx,dp)$ (\cite{1988Ball,1997N,2014Figalli})
\begin{equation*}
  \Upsilon_{t,y}:\mathbb{R}_t\times\mathbb{R}_y^N\to\mathcal{M}^+(\mathbb{R}_y^N\times\mathbb{R}_p^N),
\end{equation*}
where the Young measure is defined through the following limit: for any test function $\varphi\in L^1(\mathbb{R}_t\times\mathbb{R}_y^N;C_0(\mathbb{R}^{2N})),$
\begin{equation*}
  \lim_{\varepsilon\to0}\underset{{\mathbb{R}\times\mathbb{R}^N}}\iint\varphi(t,y,\Phi^\varepsilon(t,y))\mathrm{d}y\mathrm{d}t=
\underset{{\mathbb{R}\times\mathbb{R}^N}}\iint\underset{{\mathbb{R}^{2N}}}\iint\varphi(t,y,x,p)\Upsilon_{t,y}(dx,dp)\mathrm{d}y\mathrm{d}t.
\end{equation*}

{ Let us draw the conclusion on the convergence of Bohmian} trajectories \eqref{thcs} through the connection between the limiting Bohmian measure $\beta$ and the Young measure $\Upsilon_{t,y}(t,y)$. From \cite{1997N}, the Bohmian flow $x\to X^\varepsilon(t,x)$ is well-defined $\rho_0^\varepsilon$-a.e. for all $t\in\mathbb{R}^+$ regardless of the continuity of $u^\varepsilon$ and $\rho_0^\varepsilon:=|\psi^\varepsilon_0|^2$. The assertion also holds for the Bohmian flow $x\to X(t,x)$ with measure $\rho_0:=|\psi_0|^2$. Note that $u(t,x)$ { is not Lipschitz continuous} for all $t\in\mathbb{R}^+$, there exists some constant $T^\star>0$ ($T^\star$ could be very small) such that the Bohmian flow $X(t,x)$ is one-to-one for $t\in[0,T^\star)$ (\cite{1995H}).


\begin{theorem}[Convergence of Bohmian trajectories]\label{step3}Under the assumptions that the potential function { satisfies} \eqref{Vass} and the initial wave functions $\psi_0^\varepsilon$ is strongly convergence in $H^1(\mathbb{R}^N)$, then for all compact time intervals $I\subset[0,T)$ and $T=\min\{T_0,T^\star_0,T^\star\}$, the following convergence of corresponding Bohmian trajectories and momentum of \eqref{thcs} hold locally in measure $\rho_0$ on $\{I\times\mathrm{supp}\rho_0\}\subset \mathbb{R}\times\mathbb{R}^N$
\begin{equation*}
  X^\varepsilon\xrightarrow{\varepsilon\to0_+} X, \quad  P^\varepsilon\xrightarrow{\varepsilon\to0_+}P.
\end{equation*}
Furthermore, the limit position-momentum pair $(X,P)$ is the solution of
\begin{equation} \label{ethcs}
\begin{cases}
\dot{X} = P,\\
\dot{P} = -\nabla V^\star(X) + \frac{1}{2}\frac{\Delta \sqrt{\rho}}{\sqrt{\rho}},\\
P(t,x) = u(t,X(t,x)),\\
X(t,x)|_{t=0} =x \quad \text{and}\quad  P(t,x)|_{t=0} = u(0,x).
\end{cases}
\end{equation}
More precisely, for every $\delta>0$ and every Borel set $\Omega\subset\{I\times\mathrm{supp}\rho_0\}$ with finite Lebesgue measure $\mathscr{L}^{N+1}$, it holds that
\begin{equation*}
  \lim_{\varepsilon\to0}\mathscr{L}^{N+1}\left(\{(t,y)\in\Omega:|(X^\varepsilon(t,y),P^\varepsilon(t,y))-(X(t,y),P(t,y))|\geq\delta\}\right)=0.
\end{equation*}
\end{theorem}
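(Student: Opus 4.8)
The plan is to reduce the entire statement to the Young measure criterion of Theorem \ref{Ys}: since convergence in measure of $\Phi^\varepsilon(t,y)=(X^\varepsilon(t,y),P^\varepsilon(t,y))$ to $(X(t,y),P(t,y))$ is \emph{equivalent} to the generated Young measure $\Upsilon_{t,y}$ being the Dirac mass $\delta_{(X(t,y),P(t,y))}$ for almost every $(t,y)$, the whole theorem amounts to proving this concentration. The bridge to the results already established is the transport (push-forward) identity for the Bohmian flow: because the position density is conserved along trajectories, for every $\chi\in C_0(\mathbb{R}^{2N})$ one has $\langle\beta^\varepsilon(t),\chi\rangle=\int_{\mathbb{R}^N}\chi(\Phi^\varepsilon(t,y))\rho_0^\varepsilon(y)\,\mathrm{d}y$. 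I would first record the existence of $\Upsilon_{t,y}$ from the fundamental theorem on Young measures, using that $\Phi^\varepsilon$ is measurable and that the uniform $H^1$-bound of Lemma \ref{hsbdd} together with $\rho_0^\varepsilon\to\rho_0$ in $L^1$ (a consequence of $\psi_0^\varepsilon\to\psi_0$ in $H^1$) gives the tightness. Feeding the test function $\rho_0(y)\chi(x,p)$ into the defining relation of $\Upsilon_{t,y}$, combining with the push-forward identity, and passing to the limit $\varepsilon\to0$ (the replacement of $\rho_0^\varepsilon$ by $\rho_0$ costing only $\|\rho_0^\varepsilon-\rho_0\|_{L^1}\|\chi\|_\infty$) yields, for almost every $t$,
\begin{equation*}
\int_{\mathbb{R}^N}\Big(\iint_{\mathbb{R}^{2N}}\chi(x,p)\,\Upsilon_{t,y}(\mathrm{d}x,\mathrm{d}p)\Big)\rho_0(y)\,\mathrm{d}y=\langle\beta(t),\chi\rangle.
\end{equation*}
By Theorem \ref{step2} the right-hand side is mono-kinetic, so the $\rho_0$-average of $\Upsilon_{t,y}$ is exactly $\rho(t,x)\delta(p-u(t,x))\,\mathrm{d}x$; in particular it is supported on the graph $\{p=u(t,x)\}$ and carries no momentum spread after averaging.

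The main step, and the principal obstacle, is to upgrade this averaged identity to the pointwise concentration. The difficulty is structural: the velocity field $u$ is merely continuous and not Lipschitz, so no Gronwall/stability estimate controls the flow map directly, and the phase-space measure $\beta$ has discarded the initial label $y$, so its concentration must be transported back to the trajectory level. I would argue by a variance (second-moment) estimate in the momentum variable. The uniform $H^1$-bound controls the kinetic energy $\iint|p|^2\,\beta^\varepsilon(t)=\int|P^\varepsilon(t,y)|^2\rho_0^\varepsilon(y)\,\mathrm{d}y$, while the strong convergence $\psi^\varepsilon\to\psi$ of Theorem \ref{conhs} and of the densities in Theorem \ref{step1} force this energy to \emph{converge} to $\iint|p|^2\,\beta(t)=\int|u(t,x)|^2\rho(t,x)\,\mathrm{d}x$ rather than merely to be lower semicontinuous. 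Writing $\bar P(t,y):=\iint p\,\Upsilon_{t,y}$ for the barycenter, weak lower semicontinuity gives $\int|\bar P|^2\rho_0\,\mathrm{d}y\le\iint|p|^2\beta(t)$, whereas Jensen's inequality gives $\iint|p|^2\Upsilon_{t,y}\ge|\bar P(t,y)|^2$; combined with the energy equality these squeeze the $\rho_0$-integrated $p$-variance to zero, whence $\Upsilon_{t,y}=\mu_{t,y}(\mathrm{d}x)\otimes\delta_{\bar P(t,y)}(\mathrm{d}p)$ and, by Theorem \ref{Ys}, $P^\varepsilon\to\bar P$ locally in measure. I expect the genuinely delicate point here to be the \emph{convergence} of the kinetic energy (through the Madelung splitting $\int|\nabla\psi^\varepsilon|^2=\int|\nabla\sqrt{\rho^\varepsilon}|^2+\int\rho^\varepsilon|u^\varepsilon|^2$ and the behaviour near the nodes of $\rho$), since it is precisely this that rules out momentum spread when the field is non-Lipschitz.

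The position concentration then comes for free from the flow relation. Integrating $\dot X^\varepsilon=P^\varepsilon$ with $X^\varepsilon(0,y)=y$ gives $X^\varepsilon(t,y)=y+\int_0^tP^\varepsilon(s,y)\,\mathrm{d}s$; since $P^\varepsilon\to\bar P$ in measure and is uniformly $L^2$-bounded in $(s,y)$ (hence uniformly integrable), the time integrals converge and $X^\varepsilon\to\bar X:=y+\int_0^t\bar P\,\mathrm{d}s$ locally in measure. Consequently $\Upsilon_{t,y}=\delta_{(\bar X(t,y),\bar P(t,y))}$ is a full Dirac mass, and Theorem \ref{Ys} delivers exactly the quantitative $\mathscr{L}^{N+1}$-statement of the theorem on any Borel set $\Omega\subset\{I\times\mathrm{supp}\,\rho_0\}$ of finite measure.

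It remains to identify $(\bar X,\bar P)$ with the effective flow, i.e. to verify it solves \eqref{ethcs}, which I would do by passing to the limit in the system \eqref{thcs}. The relation $\dot X=P$ is already built in, and $\bar P=u(t,\bar X)$ follows from $u^\varepsilon\to u$ (from the densities) composed with $X^\varepsilon\to\bar X$ on the set where $\rho>0$; the quantum-potential term $\tfrac12\Delta\sqrt{\rho^\varepsilon}/\sqrt{\rho^\varepsilon}$ converges by the strong convergence and smoothness of $\rho^\varepsilon$. The remaining delicate term is the force $\nabla V(\tfrac{t}{\varepsilon},X^\varepsilon)$, a rapidly oscillating potential composed with the converging flow: here I would combine the time-averaging of Theorem \ref{wlr} (producing $\nabla V^\star$) with the convergence $X^\varepsilon\to\bar X$ in measure in a weak-times-strong fashion, the subquadratic bounds \eqref{Vass} supplying the domination that justifies exchanging the two limits. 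Once $(\bar X,\bar P)$ is shown to satisfy \eqref{ethcs} with the correct initial data, the one-to-one property of the effective Bohmian flow on $[0,T^\star)$ (from \cite{1995H}), which furnishes uniqueness for \eqref{ethcs} there, forces $(\bar X,\bar P)=(X,P)$ and completes the proof.
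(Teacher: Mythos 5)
Your overall architecture --- the push-forward identity $\langle\beta^\varepsilon(t),\chi\rangle=\int\chi(\Phi^\varepsilon(t,y))\,\rho_0^\varepsilon(y)\,\mathrm{d}y$, passage to the averaged identity $\beta(t)=\int\Upsilon_{t,y}\,\rho_0(y)\,\mathrm{d}y$ using $\rho_0^\varepsilon\to\rho_0$ in $L^1$, concentration of the Young measure, and conclusion via Theorem \ref{Ys} --- is exactly the paper's (which states the averaged identity, asserts the Dirac structure ``by analogous reasonings as in \cite{2014Figalli}'', and applies Theorem \ref{Ys}). The genuine gap is in the one step you try to carry out in detail, the momentum concentration. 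Your variance squeeze does not close: Jensen gives $\iint|p|^2\,\Upsilon_{t,y}\ge|\bar P(t,y)|^2$ and weak lower semicontinuity gives $\int|\bar P|^2\rho_0\le\iint|p|^2\,\beta(t)$, so \emph{both} inequalities bound the $\rho_0$-integrated variance $\iint|p|^2\beta(t)-\int|\bar P|^2\rho_0$ from below by zero; the ``energy equality'' $\lim_\varepsilon\int|P^\varepsilon|^2\rho_0^\varepsilon=\iint|p|^2\beta(t)$ is just the averaged identity tested against $|p|^2$ and supplies no upper bound on the variance. To make the variance vanish you would need $\int|\bar P|^2\rho_0\ge\int\rho\,|u|^2\,\mathrm{d}x$, which amounts to already knowing $\bar P(t,y)=u(t,X(t,y))$ with $X(t,\cdot)_{\#}\rho_0=\rho(t,\cdot)$ --- i.e., the conclusion. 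The argument is circular or, at best, incomplete as stated.

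The reasoning the paper defers to is different and simpler at this stage: each $\Upsilon_{t,y}$ is a nonnegative measure whose $\rho_0$-average $\beta(t)$ is mono-kinetic (Theorem \ref{step2}), hence carried by the graph $G_t=\{(x,p):p=u(t,x)\}$; therefore $\Upsilon_{t,y}(G_t^c)=0$ for $\rho_0$-a.e.\ $y$, giving $\Upsilon_{t,y}=\mu_{t,y}(\mathrm{d}x)\otimes\delta(p-u(t,x))$. But this is concentration in $p$ only \emph{conditionally} on $x$, so it does not yet yield $P^\varepsilon\to\bar P$ in measure; consequently your step ``position then comes for free by integrating $\dot X^\varepsilon=P^\varepsilon$'' has the logical order reversed --- one must first show $\mu_{t,y}=\delta_{X(t,y)}$, and that is precisely where the injectivity of the effective flow on $[0,T^\star)$ (the hypothesis $T\le T^\star$) enters, via uniqueness for the limiting continuity equation satisfied by the $x$-marginal, as in \cite{2014Figalli}. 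In your proposal $T^\star$ appears only at the very end, for identifying the limit with \eqref{ethcs}. Your closing limit passage in \eqref{thcs} (the weak-times-strong treatment of $\nabla V(\frac{t}{\varepsilon},X^\varepsilon)$ and the quantum potential near nodes of $\rho$) is also left at the level of plausibility, but the substantive defect is the concentration step above.
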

\begin{proof}
The assumption that $\psi_0^\varepsilon\xrightarrow{\varepsilon\to0}\psi_0$ strongly in $H^1(\mathbb{R}^N)$ implies that
\begin{equation}\label{ass}
  \rho_0^\varepsilon\xrightarrow{\varepsilon\to0_+}\rho_0\quad \text{strongly in } L^1_+(\mathbb{R}^N).
\end{equation}
Then the following assertion can be obtained from \cite{2010Markowich} together with \eqref{ass}
\begin{equation}\label{ass1}
  \beta(t,x,p)=\underset{\mathbb{R}_y^N}\int\Upsilon_{t,y}(t,p)\rho_0(y)\mathrm{d}y.
\end{equation}

With identity \eqref{ass1} and the convergence of Bohmian measures (Theorem \ref{step2}) in mind, we can infer that the Young measure $\Upsilon_{t,y}(t,p)$ is supported in a single point by analogous reasonings as in \cite{2014Figalli}. That is the Young measure $\Upsilon_{t,y}(t,p)$ can be given by
\begin{equation*}
  \Upsilon_{t,y}(t,p) = \delta(x-X(x,y))\delta(p-u(t,X(t,y))) \quad \text{a.e. on }\mathrm{supp}\rho_0\subset\mathbb{R}^N.
\end{equation*}
Thus the following assertion holds from Theorem \ref{Ys} in measure theory
\begin{equation*}
  X^\varepsilon\xrightarrow{\varepsilon\to0_+} X, \quad  P^\varepsilon\xrightarrow{\varepsilon\to0_+}P,
\end{equation*}
locally in measure on $\{I\times\mathrm{supp}\rho_0\}\subset \mathbb{R}\times\mathbb{R}^N$. The proof of Theorem \ref{step3} is complete.
\end{proof}

\section{Conclusion and Discussion}
\noindent

In this paper, { we have studied the limit of the Bohmian trajectory under} assumptions on the potential and on the considered class of initial wave function $\psi_0^\varepsilon$. { The Bohmian trajectories we considered are guided} by a family of wave functions which satisfied the Schr\"{o}dinger equations with rapidly oscillating potential. Due to the complicated relation between wave functions and
the Bohmian trajectories, the convergence of the  Schr\"{o}dinger equations does not directly imply convergence of the corresponding Bohmian trajectories. In order to gain more insight, we { recalled} the Bohmian measure which naturally arise in the Bohmian interpretation of quantum mechanics and { studied the   limit} of these measures. On this foundation, together with a well-known theorem in measure theory for Young measure corresponding to the Bohmian flow, { we managed to deduce} that the Bohmian trajectories converge locally in measure. However, the conclusion holds when the Bohmian trajectory is one-to-one on finite time interval as a result of the lack of continuity of velocity field. { We also have several comments in the following.}
{
\begin{itemize}
  \item [1.]The study in this paper is important to understand how the time-oscillating Schr\"{o}dinger equations affect the corresponding Bohmian trajectories in some finite time interval. Our approach can also be applied to a class of Schr\"{o}dinger equations when the potential has other heterogeneities (\cite{1978Lions,1999Donato}). With { techniques} provided in homogenization (\cite{2014duan}), we can study the Bohmian trajectory guided by the effective Schr\"{o}dinger equation. This can be taken as a simplified method in the study of Bohmian mechanics.
  \item [2.]We only studied the Schr\"{o}dinger equation with linear potential. It would be interesting and useful to consider the Bohmian trajectories guided by the Schr\"{o}dinger equation with nonlinear potential, e.g. $V(\cdot)=g(|\psi|^2)$. The main obstacle at this point is the existence of Bohmian trajectories. It is conceivable that the results in this paper could be generalized with  suitable assumptions on the initial wave function and the nonlinear potential.
  \item [3.]Chaos in quantum mechanics is a problem of great current interest. Bohmian trajectories  that approach the nodal point are chaotic in general, and the Bohmian trajectories could be chaotic if the classical potential is { time dependent in more than one-dimensional system} (\cite{2008Contopoulos,2020Contopoulos,2006chaos,1996Iacomelli,2005Wisniacki,2006Wisniacki}). In this paper, the convergence of Bohmian trajectory is valid in a finite time interval where the Bohmian flow is one-to-one and chaos has not occurred in this time interval.  However, when the Bohmian trajectory is chaotic, whether the Bohmian trajectory guided by the time-oscillating Schr\"{o}dinger equation has a limit, what is the expression of the limit and  how the convergence happens are still unknown and worthy of study.
\end{itemize}

}

\section*{Acknowledgement}

 We would like to thank Stephen Wiggins for bringing us to the study of Bohmian mechanics.

\section*{Data Availability}
The data that support the findings of this study are available from the corresponding author
upon reasonable request.

\end{document}